\documentclass[conference]{IEEEtran}
%\IEEEoverridecommandlockouts
%\pdfoutput=1
\usepackage[utf8]{inputenc}
\usepackage{amsmath}
\usepackage{graphicx}
\usepackage{caption}
\captionsetup[figure]{labelfont={footnotesize},labelformat={default},labelsep=period,name={Fig.}}
\usepackage{epstopdf}
\usepackage{amsmath,amsfonts,amssymb,amsthm,epsfig,epstopdf,url,array}
\usepackage[linesnumbered,boxed,ruled]{algorithm2e}
\usepackage{graphicx}
\usepackage{color}
\usepackage{textcomp}

\usepackage{cite}
\usepackage{amsthm}
\usepackage{amsmath,amsthm}
\usepackage{marvosym}
\usepackage{array}
%\usepackage{biblatex}
%\addbibresource{ref.bib}
\def\BibTeX{{\rm B\kern-.05em{\sc i\kern-.025em b}\kern-.08em
    T\kern-.1667em\lower.7ex\hbox{E}\kern-.125emX}}

\newtheorem{myth}{Theorem}
\newtheorem{mydef}[myth]{Definition} 
\newtheorem{myprop}[myth]{\bf Proposition}

\newtheorem{myle}[myth]{\bf Lemma}

\newcommand{\cN}{\mathcal{N}}

\newcommand{\cS}{\mathcal{S}}

\ifx\pdftexversion\undefined
\usepackage{hyperref}
\else
\usepackage[pdftex]{hyperref}
\fi
\hypersetup{
	colorlinks=true,
	linkcolor=blue,
	citecolor=blue,
	urlcolor=blue
}
\newcommand{\enum}[1]{\textit{#1)}}

\begin{document}
\IEEEoverridecommandlockouts
\title{Joint System Latency and Data Freshness Optimization for Cache-enabled Mobile Crowdsensing Networks}
\author{
\IEEEauthorblockN{
Kexin Shi\IEEEauthorrefmark{1}, Yaru Fu\IEEEauthorrefmark{1}, Yongna Guo\IEEEauthorrefmark{2}, Fu Lee Wang\IEEEauthorrefmark{1}, and Yan Zhang\IEEEauthorrefmark{3}}

\IEEEauthorblockA{\IEEEauthorrefmark{1}\text{School of Science and Technology, Hong Kong Metropolitan University, Hong Kong}} 
\IEEEauthorblockA{\IEEEauthorrefmark{2}\text{Department of Electrical Engineering and Computer Science, KTH Royal Institute of Technology, Sweden}}
\IEEEauthorblockA{\IEEEauthorrefmark{3}\text{Department of Informatics, University of Oslo, Norway}}
%E-mails: s1305223@live.hkmu.edu.hk, yfu@hkmu.edu.hk, yongna@kth.se, pwang@hkmu.edu.hk, yanzhang@ieee.org
}
\maketitle
\begin{abstract}
Mobile crowdsensing (MCS) networks enable large-scale data collection by leveraging the ubiquity of mobile devices. However, frequent sensing and data transmission can lead to significant resource consumption. To mitigate this issue, edge caching has been proposed as a solution for storing recently collected data. Nonetheless, this approach may compromise data freshness. In this paper, we investigate the trade-off between re-using cached task results and re-sensing tasks in cache-enabled MCS networks, aiming to minimize system latency while maintaining information freshness. To this end, we formulate a weighted delay and age of information (AoI) minimization problem, jointly optimizing sensing decisions, user selection, channel selection, task allocation, and caching strategies. The problem is a mixed-integer non-convex programming problem which is intractable. Therefore, we decompose the long-term problem into sequential one-shot sub-problems and design a framework that optimizes system latency, task sensing decision, and caching strategy subproblems. When one task is re-sensing, the one-shot problem simplifies to the system latency minimization problem, which can be solved optimally. The task sensing decision is then made by comparing the system latency and AoI. Additionally, a Bayesian update strategy is developed to manage the cached task results. Building upon this framework, we propose a lightweight and time-efficient algorithm that makes real-time decisions for the long-term optimization problem. Extensive simulation results validate the effectiveness of our approach.
\end{abstract}
% Note that keywords are not normally used for peerreview papers.
\begin{IEEEkeywords}
Age of information, edge caching,  mobile crowdsensing networks, resource management. 
\end{IEEEkeywords}

\section{Introduction}
The rapid advancement of the Internet of Things (IoT) and artificial intelligence (AI) has fueled an unprecedented demand for real-time, large-scale data sensing and analysis \cite{wenshuai}. Mobile crowdsensing (MCS) has emerged as a transformative paradigm that leverages the ubiquity of mobile devices to collect and utilize data on a massive scale \cite{mcs1}.  The efficacy of MCS systems is intrinsically tied to the active engagement of their user base. Various incentive mechanisms have been developed to encourage and maintain user engagement \cite{FL+MCS,incentive_V}. Moreover, to effectively utilize users' limited resources, various task allocation strategies have been proposed, aiming to minimize system costs while ensuring satisfactory task coverage and quality \cite{us_4,us_5}. However, these studies often overlook the practical constraints of wireless networks and their impact on MCS performance.

Considering the sensing and communication processes in MCS systems, several studies have examined the impact of limited wireless network resources on MCS performance. For instance, the study in \cite{Li} explored the optimization of sensing rewards by simultaneously considering task allocation, user selection, and constrained energy and transmission power allocation. Authors in \cite{WCNC2024} addressed the challenge of minimizing latency, considering practical constraints such as task size requirements and subband allocation limitations. To further optimize resource utilization, some studies have integrated edge caching into MCS networks, enabling local storage of frequently accessed data to reduce sensing and transmission demands. For example, the work in \cite{mcs+cache2} examined the participant selection process in edge-cached systems, proposing a caching strategy based on data quality and developing incentives to promote high-quality data contributions. Researchers in \cite{mcs+cache3} introduced a priority-based cache management technique to enhance network communication efficiency in cache-enabled MCS environments. While edge caching can effectively manage network resources and reduce energy consumption, it may introduce potential data freshness issues in MCS systems. The timeliness of cached sensing tasks is crucial in many MCS applications, as the sensing data is highly time-sensitive. To quantify and address this data freshness challenge, Age of Information (AoI) is a widely adopted metric for quantifying information freshness, prompting recent works to consider the freshness of sensing task data in MCS systems. For instance, studies in \cite{paper1,paper8} explored data freshness in MCS through the lens of incentives and pricing strategies. Besides, other research \cite{paper3,paper6} examined information freshness thresholds in the context of UAV-supported MCS networks. 

Maintaining information freshness often requires frequent sensing and transmission, which can strain network resources and limit the scalability of MCS deployments. However, existing studies \cite{paper1,paper8,paper3,paper6} on information freshness in MCS systems have not adequately addressed the impact of wireless network resource constraints on the practicality of their solutions. To bridge this gap, this paper considers the wireless communication constraints and investigates the trade-off between re-using cached task results and re-sensing tasks in cache-enabled MCS networks.  We propose a time-efficient framework that intelligently balances this trade-off to minimize system latency while ensuring the freshness of cached information. Our main contributions are summarized as follows:
\begin{itemize}
\item We formulate a weighted system latency and AoI minimization problem in cache-enabled MCS networks that jointly optimizes the sensing decision, user selection, subchannel allocation, task allocation, and caching strategy. The formulated problem is a mixed-integer non-convex programming problem, which is challenging to solve. To facilitate the analysis, we decompose the original problem into multiple one-shot sub-problems.
\item Each one-shot sub-problem remains a mixed-integer non-convex problem and is thus intractable. We further decompose it into system latency minimization, task sensing, and caching decision subproblems. By assuming task re-sensing, the system latency minimization sub-problem can be solved optimally using the Hungarian algorithm. Subsequently, task sensing decisions are made by balancing system latency against AoI. Finally, we propose a Bayesian update strategy to manage the cached task results. 
\item Building upon the one-shot optimization algorithm, we develop an overall time-efficient framework to address the original long-term optimization problem. Extensive numerical results are presented to validate the superiority of our developed scheme compared to various benchmark strategies. 
\end{itemize}
The subsequent sections of this paper are structured as follows: Section II introduces the network description and system model. Section III gives the joint optimization problem. Section IV presents the proposed time-efficient framework in detail. Section V demonstrates the superiority of our developed algorithm through numerical simulations. Finally, Section VI summarizes our work.
\section{System Model}
\begin{figure}[t]
\centering
\includegraphics[width=8cm]{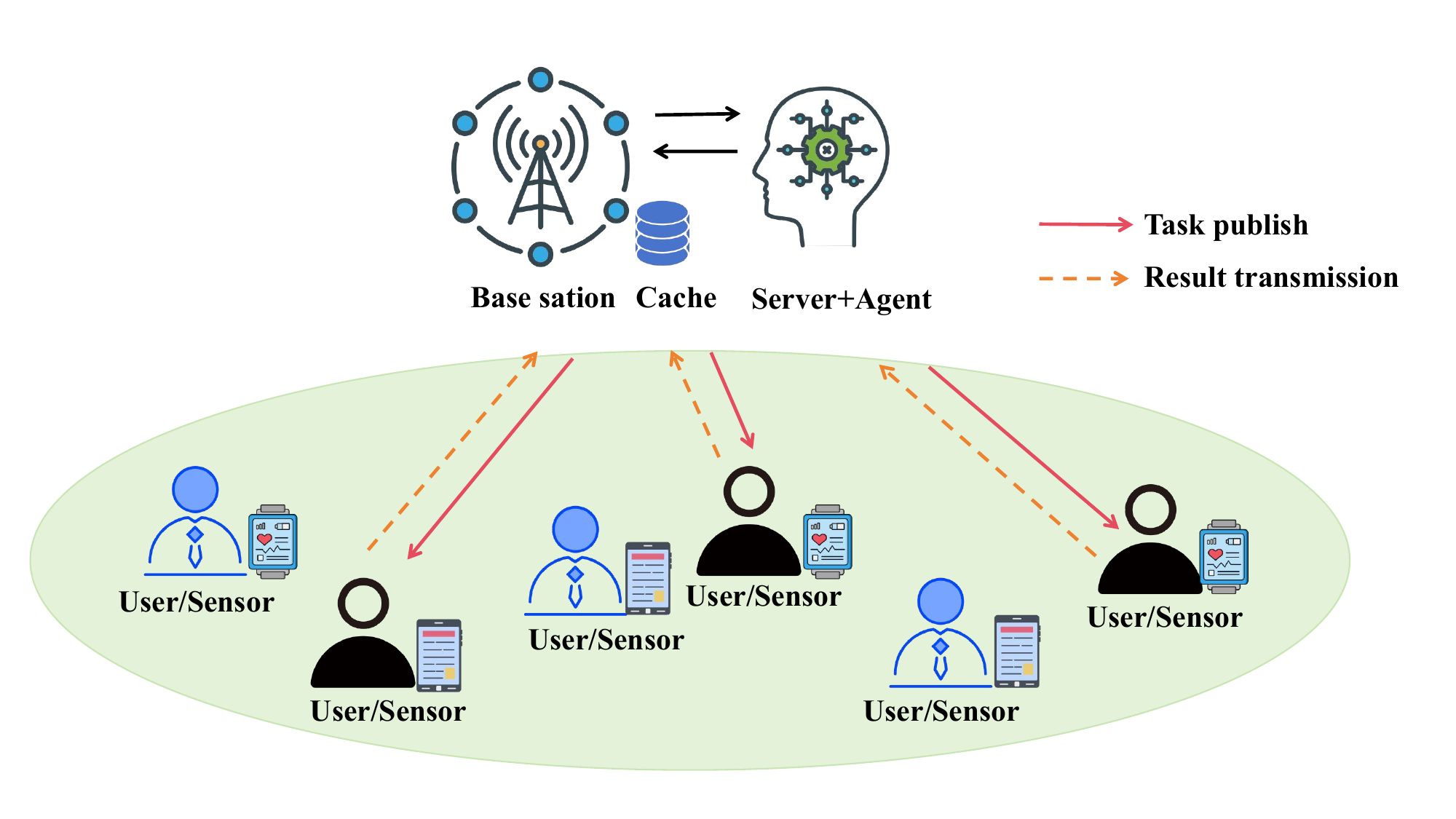}
\captionsetup{font=footnotesize}
\caption{An example of the cache-enabled MCS systems.}\label{model}
\end{figure}
\subsection{Network Description and AoI Mechanism}
As illustrated in Figure \ref{model},  we consider a cache-enabled MCS network where a cellular base station (BS) integrates an edge server and a cache device with capacity $Q$. The network comprises $K$ sensor-equipped users, denoted by $\mathcal{K} = \{1,2, \ldots, K\}$. The system operates in discrete time slots $\mathcal{T} = \{1,2, \ldots, T\}$, with interval $\delta_t$ for time slot $t$. In each time slot, the MCS agent publishes a sensing task $i_t$ from the task type $\mathcal{M} = \{1,2, \ldots, M\}$. The task sensing decision at time slot $t$ is represented by $l_{i_t}^t \in \{0, 1\}$, where $l_{i_t}^t = 1$ indicates re-sensing and $l_{i_t}^t = 0$ means re-using cached data. The MCS agent decides whether to request users to re-sense task $i_t$ or re-use cached results from the BS, based on the system latency and the freshness of cached results. For re-sensing scenarios, the agent selects users through binary variable $s_k^t$ for user $k$ at time slot $t$, considering spectrum resource constraints. At the end of each time slot, the MCS agent decides whether to cache the sensing results at the BS.

Denote the size of task type $i$ as $V_i$. Let $c_{i}^t \in \{0,1\}$ represents the caching state of task $i$ at time slot $t$, following the capacity constraint $\sum_{i \in \mathcal M }c_i^{{t}} V_{i} \le Q$. To ensure data freshness, we evaluate the cached task results using AoI, denoted as $\Delta_{i}^t$ for task $i$ at the beginning of time slot $t$. A maximum AoI threshold $\Delta_{\rm max}$ is set for the BS cache. If $\Delta_{i}^t \ge \Delta_{\rm max}$, the cached task $i$ is discarded, i.e., $c_{i}^{t}=0$. In time slot $t$, if task $i$ was already cached in the previous time slot $t-1$, its AoI is increased by the time duration $\delta_{t - 1}$. Conversely, if task $i$ was not cached in time slot $t-1$, its AoI is reset to 0, which can be expressed as:
\begin{equation}
\Delta_{i}^t = (\Delta_{i}^{t-1} + \delta_{t-1}) \cdot c_{i}^{t-1}, ~i \in \mathcal M.
\label{eq:aoi}
\end{equation}
\subsection{Task Sensing and Transmission Model}
When user $k$ is selected to sense task $i_t$ at time slot $t$, this user will take on a certain amount of sensing data size $z_k^t$. To ensure task completion at time slot $t$, we have:
\begin{equation}
\sum_{k\in \mathcal{K}} z_k^t \ge l_{i_t}^t V_{i_t}.  
\label{eq:task}
\end{equation}
 Denote $o_k^t$ as the sensing data rate for user $k$ at time slot $t$, the sensing latency can be calculated by:
\begin{equation}
T_{k,{\text{sen}}}^t = \frac{z_k^t}{o_k^t}.
\end{equation}
 Let $e_k^t$ be the sensing energy consumption per bit for user $k$ at time slot $t$, the sensing energy consumption is given by:
 \begin{equation}
 E_{k,{\text{sen}}}^t = e_k^t \cdot z_k^t.
 \end{equation}
For data transmission, the system employs orthogonal frequency division multiple access (OFDMA) with $N$ orthogonal subchannels of bandwidth $W$. The subchannel allocation variable $b_{k,n}^t \in \{0, 1\}$ indicates whether subchannel $n$ is assigned to user $k$ at time slot $t$, where $n \in \{0,1,\cdots,N\}.$ The relationship between $b_{k,n}^t$ and the user selection variable $s_k^t$ is given by $\sum_{n\in \mathcal{N}} b_{k,n}^t= s_k^t,~k\in{\mathcal{K}}$. Furthermore, each subchannel can be allocated to at most one user, i.e., $\sum_{k\in \mathcal{K}} b_{k,n}^t \le 1, ~n\in \mathcal{N}$. Let $P_k^t$ be the transmission power of user $k$ at time slot $t$. Denote the channel gain of user $k$ on subchannel $n$ as $g_{k,n}^t$, and $N_0$ represents the noise power density. The achievable transmission rate for user $k$ is given by:
\begin{equation}
r_k^t = \sum_{n\in \mathcal{N}} b_{k,n}^t W \log_{2}\left(1+\frac{P_k^t g_{k,n}^t}{N_0 W}\right).
\label{eq:rkt}
\end{equation}
 Similarly, the transmission latency  for user $k$ at time slot $t$ can be calculated by:
\begin{equation}
T_{k,{\text {tran}}}^t = \frac{z_k^t}{r_k^t}.
\end{equation}
The energy consumption for data transmission can be given by:
\begin{equation}
E_{k,{\text {tran}}}^t = P_k^t \cdot T_{k,{\text {tran}}}^t.
\end{equation}
Furthermore, the total energy consumption for each user is bounded by:
\begin{equation}
E_{k,{\text{sen}}}^t +E_{k,{\text{tran}}}^t \le E_k, ~k\in \mathcal{K}.
\label{eq:Elimit}
\end{equation}
Overall, we denote the maximum system latency for each time slot $t$ as ${\bar D}^t$, which is determined by the maximum value of the completion of data sensing and aggregation of all selected users, can be expressed as follows:
        \begin{equation}
		{\bar D}^t =  \max\limits_{ k \in \mathcal{K}} \left\{  T_{k,{\text{sen}}}^t + T_{k,{\text {tran}}}^t\right\}.
        \label{eq:D_bar}
	\end{equation}

\section{Problem Formulation}

In this work, we aim to minimize the weighted AoI and system latency optimization problem under users' energy consumption and cache capacity constraints. We define task sensing decision, bandwidth allocation, user selection decision, sensing data size allocation, and caching strategy for time slot $t$ as $\boldsymbol l^t = (l_{i_t}^t)_{i_t \in \mathcal M}$, $\boldsymbol b^t= (b_{k,n}^t)_{k\in \mathcal{K}, n\in \mathcal{N}}$, $\boldsymbol s^t = (s_{k}^t)_{k\in \mathcal{K}}$, $\boldsymbol z^t = (z_k^t)_{k\in \mathcal{K}}$, $\boldsymbol c^t = (c_i^t)_{i\in \mathcal{M}}$, respectively. Furthermore, we give the task sensing vector, bandwidth allocation vector, user selection vector, sensing data size allocation, and caching strategy over all time slots as $\boldsymbol l = (\boldsymbol l^t)_{t\in \mathcal T}$, $\boldsymbol b = (\boldsymbol b^t)_{t\in \mathcal T}$, $\boldsymbol s = (\boldsymbol s^t)_{t\in \mathcal T}$, $\boldsymbol z = (\boldsymbol z^t)_{t\in \mathcal T}$, $\boldsymbol c = (\boldsymbol c^t)_{t\in \mathcal T}$, respectively. Moreover, we set $\beta_{1}$ and $\beta_{2}$ as the weights for system latency and AoI, respectively. The optimization problem can be mathematically formulated as follows:
	\begin{equation}
		\min\limits_{\boldsymbol {l, b, s, z, c}}~  \frac{1}{T}  \sum_{t \in \mathcal{T}} \left(\beta_1 l_{i_t}^t {\bar D}^t + \beta_2 (1 - l_{i_t}^t) \Delta_{i_t}^t\right)
        \label{eq:OF}
	\end{equation} 
\begin{align*}
  \text{ s.t. }  &C_1: \sum_{k\in \mathcal{K}} z_k^t \ge l_{i_t}^t V_{i_t}, ~t\in \mathcal{T},\\
    &C_2: \sum_{i \in \mathcal M }c_i^{{t}} V_{i} \le Q, ~t \in \mathcal{T},\\
    &C_3: \sum_{k\in \mathcal{K}} b_{k,n}^t \le 1, ~n\in \mathcal{N}, ~t\in \mathcal{T},\\
    &C_4: \sum_{n\in \mathcal{N}} b_{k,n}^t= s_k^t,~k\in{\mathcal{K}}, ~t\in \mathcal{T},\\
    &C_5: 1- l_{i_t}^t \le c_{i_t}^{t-1}, ~t\in \mathcal{T}, \\
    &C_6: E_{k,\text{sen}}^t + E_{k,\text{tran}}^t \le E_k, ~k\in \mathcal{K}, ~t\in \mathcal{T},\\
    &C_7: 0 \le \Delta_{i}^t \le \Delta_{\text{max}}, ~i\in \mathcal M, ~t\in \mathcal{T},\\
    &C_8: l_{i_t}^t \in \{0,1\}, ~t\in \mathcal{T},\\
    &C_{9}: c_i^t \in \{0,1\}, ~i \in \mathcal{M},  ~t\in \mathcal{T},\\
    &C_{10}: b_{k,n}^t \in \{0,1\}, ~k\in \mathcal{K}, ~n\in \mathcal{N}, ~t\in \mathcal{T},\\
    &C_{11}: s_{k}^t \in \{0,1\}, ~k\in \mathcal{K}, ~t\in \mathcal{T},
\end{align*} 
where $C_1$ specifies the sensing bit size requirement, and $C_2$ limits BS cache capacity. $C_3$ and $C_4$ constrain subchannel allocation. $C_5$ enforces re-sensing for tasks not cached in slot $t-1$. $C_6$ restricts user energy consumption. $C_7$ triggers cache removal when task AoI exceeds the threshold. $C_8-C_{11}$ define binary variables for task sensing, caching, subchannel allocation, and user selection decisions. 

\section{Algorithm Design}
\subsection{Problem Transformation}
%Due to the intractability of the original problem \eqref{eq:OF}, we decompose it into a sequence of one-shot optimization problems. The formulation for time slot $t$ is given as follows:
Problem \eqref{eq:OF} is a non-convex mixed-integer problem that requires complete system information for optimization, which is often intractable. Therefore, we can decompose it into one-shot sub-problems, which can be expressed as follows:
        \begin{equation}
	\min\limits_{\boldsymbol{l}^t, \boldsymbol{b}^t, \boldsymbol{s}^t, \boldsymbol{z}^t, \boldsymbol{c}^t}~  \beta_1 l_{i_t}^t {\bar D}^t + \beta_2 (1 - l_{i_t}^t) \Delta_{i_t}^t
        \label{eq:one-shot-OF}
        \end{equation}
     \begin{align*}
   \text{ s.t. } &C^*_1: \sum_{k\in \mathcal{K}} z_k^t \ge l_{i_t}^t  V_{i_t},\\
    &C^*_2: \sum_{i \in \mathcal M }c_i^{{t}} V_{i} \le Q,\\
    &C^*_3: \sum_{k\in \mathcal{K}} b_{k,n}^t \le 1, ~n\in \mathcal{N}, \\
    &C_4^*: \sum_{n\in \mathcal{N}} b_{k,n}^t= s_k^t,~k\in{\mathcal{K}},\\
    &C^*_5: 1- l_{i_t}^t \le c_{i_t}^{t-1},\\
    &C^*_6: E_{k,\text{sen}}^t + E_{k,\text{tran}}^t \le E_k, ~k\in \mathcal{K},\\
    &C^*_7: 0 \le \Delta_{i}^t \le \Delta_{\text{max}}, ~i \in \mathcal M,\\
    &C^*_8: l_{i_t}^t \in \{0,1\},\\
    &C^*_{9}: c_i^t \in \{0,1\}, ~i \in \mathcal{M},\\
    &C^*_{10}: b_{k,n}^t \in \{0,1\}, ~k\in \mathcal{K}, ~n\in \mathcal{N},\\
    &C_{11}^*: s_{k}^t \in \{0,1\}, ~k\in \mathcal{K}.
     \end{align*}
Solving problem \eqref{eq:one-shot-OF} is still challenging due to its non-convex, mixed-integer, and nonlinear characteristics. Therefore, we develop a framework to address this one-shot problem efficiently.
\subsection{System Latency Minimization Sub-problem}
Given the MCS agent's decision to re-sense task $i_t$, i.e., $l_{i_t}^t = 1$, the one-shot problem \eqref{eq:one-shot-OF} can be degenerated into the following system latency minimization sub-problem:
        \begin{equation}
	\min_{ \boldsymbol{b}^t, \boldsymbol{s}^t, \boldsymbol{z}^t}~  {\bar D}^t
       \label{eq:minD_bar}
       \end{equation} 
   \begin{align*}
  \text{ s.t. }  &C^*_1: \sum_{k\in \mathcal{K}} z_k^t \ge l_{i_t}^t  V_{i_t},\\
    &C^*_3: \sum_{k\in \mathcal{K}} b_{k,n}^t \le 1, ~n\in \mathcal{N}, \\
    &C_4^*: \sum_{n\in \mathcal{N}} b_{k,n}^t= s_k^t,~k\in{\mathcal{K}},\\
   &C^*_6: E_{k,\text{sen}}^t + E_{k,\text{tran}}^t \le E_k,~k\in \mathcal{K},\\
   &C^*_{10}: b_{k,n}^t \in \{0,1\}, ~k\in \mathcal{K}, ~n\in \mathcal{N},\\   
   &C_{11}^*: s_{k}^t \in \{0,1\}, ~k\in \mathcal{K}.
  \end{align*}
  
 We first assume the user selection variable $\boldsymbol{s}^t$ and subchannel allocation $\boldsymbol b^t$ are fixed. Denote $\mathcal S^t \subset \mathcal K$ as the selected users set at time slot $t$, and let $\alpha_k^t = \frac{1}{o_k^t} + \frac{1}{r_k^t}$ represent the data processing time per bit for user $k \in \cS^t$ at $t$. Then, the sub-problem \eqref{eq:minD_bar} can be reformulated as follows:
  \begin{equation}
  \min\limits_{\boldsymbol z^t} \max\limits_{k \in \mathcal S^t} ~\alpha_{k}^t z_k^t
  \label{eq:minaz}
  \end{equation}
   \begin{align*}
  \text{ s.t. } C_1^*, C_6^*.
    \end{align*}
Let $\hat{z}_k^t$ denote the optimal sensing data size allocation for user $k$ at time slot $t$. The following Lemma characterizes this optimal allocation:
   \begin{myle}
   \label{l1}
    The closed-form optimal solution of minimization problem \eqref{eq:minaz} can be characterized as:
    \begin{equation}
    \hat{z}_k^t = \min\left\{\frac{E_k}{A_{k}^{t}(b_{k,n}^t)}, \frac{V_{i_t}}{\alpha_k^t \left(\sum_{k \in \mathcal {S}^t} \frac{1}{\alpha_k^t}\right)}\right\}, ~ k \in \mathcal{S}^t.
    \label{eq:optzkt}
    \end{equation}
   \end{myle}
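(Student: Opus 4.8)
The plan is to turn problem~\eqref{eq:minaz} into a one-dimensional threshold problem and then read off the water-filling allocation. \textbf{Step 1 (make the budget explicit).} First I would rewrite the energy constraint $C_6^*$: substituting $E_{k,\text{sen}}^t=e_k^t z_k^t$ and $E_{k,\text{tran}}^t=P_k^t z_k^t/r_k^t$ gives $z_k^t\big(e_k^t+P_k^t/r_k^t\big)\le E_k$, i.e.\ $0\le z_k^t\le E_k/A_k^t(b_{k,n}^t)$ with $A_k^t(b_{k,n}^t):=e_k^t+P_k^t/r_k^t$ the total sensing-plus-transmission energy per bit (which enters through $r_k^t$ in~\eqref{eq:rkt}). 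Since $l_{i_t}^t=1$ here, constraint $C_1^*$ becomes $\sum_{k\in\mathcal S^t}z_k^t\ge V_{i_t}$, so the feasible set is a box intersected with one half-space and the objective $\max_{k}\alpha_k^t z_k^t$ is continuous and coordinatewise non-decreasing on it.

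\textbf{Step 2 (threshold reformulation).} I would introduce the epigraph variable $D$ with $\alpha_k^t z_k^t\le D$ for all $k\in\mathcal S^t$, equivalently $z_k^t\le D/\alpha_k^t$. For a fixed $D$ the maximum payload deliverable under this cap together with the energy box is $\Phi(D):=\sum_{k\in\mathcal S^t}\min\{D/\alpha_k^t,\,E_k/A_k^t\}$, a continuous non-decreasing function of $D$. Hence \eqref{eq:minaz} is feasible iff $\sum_{k\in\mathcal S^t}E_k/A_k^t\ge V_{i_t}$, its optimal value is the smallest $D^\star$ with $\Phi(D^\star)\ge V_{i_t}$, and $\hat z_k^t=\min\{D^\star/\alpha_k^t,\,E_k/A_k^t\}$ attains it; monotonicity of the objective forces $C_1^*$ to be active at the optimum, so $D^\star$ solves $\Phi(D^\star)=V_{i_t}$.

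\textbf{Step 3 (solving for $D^\star$).} The remaining, and most delicate, part is to resolve $\Phi(D^\star)=V_{i_t}$ into the closed form~\eqref{eq:optzkt}. When no energy bound is binding at the balanced point, every term of $\Phi$ equals $D^\star/\alpha_k^t$, so $D^\star\sum_{k\in\mathcal S^t}1/\alpha_k^t=V_{i_t}$ gives $D^\star=V_{i_t}/\sum_{k\in\mathcal S^t}(1/\alpha_k^t)$ and hence $\hat z_k^t=V_{i_t}/\big(\alpha_k^t\sum_{k\in\mathcal S^t}1/\alpha_k^t\big)$, which is exactly the second argument of the $\min$ in~\eqref{eq:optzkt}; a user whose balanced share would exceed its energy bound is instead pinned at $E_k/A_k^t$, the first argument, and the $\min$ records precisely this dichotomy. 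I would close with an exchange/KKT argument: moving any feasible $\boldsymbol z^t$ toward $\hat{\boldsymbol z}^t$ neither increases $\max_k\alpha_k^t z_k^t$ nor violates $C_1^*$, which certifies optimality. I expect the bookkeeping for the energy-limited users — identifying when they are pinned at $E_k/A_k^t$, redistributing the residual payload over the unconstrained users, and thereby delineating the regime in which \eqref{eq:optzkt} is the exact water-filling solution — to be the part of the argument that needs the most care.
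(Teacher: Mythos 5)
Your route is genuinely different from, and more rigorous than, the paper's. The paper's proof simply asserts three conditions --- that $C_1^*$ binds, that the processing times $\alpha_k^t z_k^t$ equalize across $\mathcal S^t$ (yielding the second argument of the $\min$), and that $C_6^*$ caps $z_k^t$ at $E_k/A_k^t(b_{k,n}^t)$ --- and then takes the componentwise minimum of the two resulting expressions without analyzing how the cap and the equalization interact. Your epigraph reformulation with $\Phi(D)=\sum_{k\in\mathcal S^t}\min\{D/\alpha_k^t,\,E_k/A_k^t\}$ buys three things the paper does not have: an actual argument that equalization is optimal (monotonicity of $\Phi$ plus activity of $C_1^*$), an explicit feasibility condition $\sum_k E_k/A_k^t\ge V_{i_t}$, and a correct characterization $\hat z_k^t=\min\{D^\star/\alpha_k^t,\,E_k/A_k^t\}$ where $D^\star$ solves $\Phi(D^\star)=V_{i_t}$.

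The ``delicate bookkeeping'' you flag in Step 3 is in fact where your analysis diverges from the stated Lemma, and it is worth making this explicit rather than leaving it as an expectation. The Lemma's formula \eqref{eq:optzkt} uses the unconstrained balanced level $V_{i_t}/\sum_{k\in\mathcal S^t}(1/\alpha_k^t)$ in place of $D^\star$. These coincide only when no energy bound binds at that level. If some user's balanced share exceeds $E_k/A_k^t(b_{k,n}^t)$, then truncating it with the $\min$ while leaving the other users at their original balanced shares gives $\sum_k\hat z_k^t<V_{i_t}$, violating $C_1^*$; the true optimum must raise the water level over the unpinned users to absorb the residual payload, which is exactly what your $D^\star$ does and what the paper's closed form omits. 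So your proposal is not merely an alternative proof of the Lemma --- completed, it would show that \eqref{eq:optzkt} is exact only in the regime where the energy constraints are inactive at the balanced point, and would supply the correct formula otherwise. The remaining work is mechanical (sort users by $\alpha_k^t E_k/A_k^t$, peel off pinned users, recompute the level), and your exchange/monotonicity certificate of optimality already covers the general case.
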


   \begin{proof}
   %Problem (16) is to minimize the maximum processing time among all selected users. If the user's sensing latency is not equal, there must be a user with the largest processing time. Thus, the objective of minimizing the total system latency is essentially to reduce the variance in the time required for users to process data. Therefore, when all users process the sensing data in equal time, (16) takes the minimum value.
   Problem \eqref{eq:minaz} is to minimize the maximum processing time among all selected users. The optimal solution can be obtained by satisfying three essential conditions. First, $C_1^*$ is satisfied with strict equality, i.e., $\sum_{k\in \mathcal{S}^t}z_k^t = l_{i_t}^t V_{i_t}$. Denote the data processing time of user $k$ in time slot $t$ as $\gamma^t_k = \alpha_k^t z_k^t,~ \forall k \in \mathcal{S}^t$. Then, we can get the following equation:
    \begin{equation}
   \sum_{k \in \mathcal S^t}\frac{\gamma^t_k}{\alpha_k^t}= V_{i_t}.
   \label{eq:C1*}
   \end{equation}
   Second, the data processing duration must be uniform across all participating users, i.e., $\gamma^t_k = \alpha_k^t z_k^t = \frac{V_{i_t}}{\sum_{k \in \mathcal S^t} \frac{1}{\alpha_k^t}}$, $\forall k \in \mathcal S^t$, we can get:
\begin{equation}
    {z}_k^t = \frac{V_{i_t}}{\alpha_k^t \left(\sum_{k \in \mathcal {S}^t} \frac{1}{\alpha_k^t}\right)}, ~k \in \mathcal{S}^t.
    \label{eq:optzkt1}
\end{equation}
   Finally, considering the energy constraints $C_6^*$, we obtain that: 
  \begin{equation}
   z_k^t \left( e_k^t + \frac{P_k^t}{\sum_{n\in \mathcal{N}}  b_{k,n}^t W \log_{2}\left(1+\frac{P_k^t g_{k,n}^t}{N_0 W}\right)}\right) \leq E_k, ~k \in \mathcal{S}^t.
   \label{eq:C6*}
   \end{equation}
   Let $A_{k}^{t}(b_{k,n}^t) = e_k^t + \frac{P_k^t}{\sum_{n\in \mathcal{N}}  b_{k,n}^t W \log_{2}\left(1+\frac{P_k^t g_{k,n}^t}{N_0 W}\right)}$, where $k \in \mathcal{S}^t$, we then get:
   \begin{equation}
   {z}_k^t \leq \frac{E_k}{A_{k}^{t}(b_{k,n}^t)}, ~k \in \mathcal{S}^t.
   \label{eq:optzkt2}
   \end{equation}
Combining \eqref{eq:optzkt1} and \eqref{eq:optzkt2}, the Lemma has been proved.
   \end{proof}
   To solve the system latency sub-problem \eqref{eq:minD_bar}, we first address the optimal user-subchannel pairing. Define $k_n \in \mathcal{K}$ as the user assigned to subchannel $n \in \mathcal{N}$. The optimal assignment follows Theorem 2: 
   \begin{myth}
   \label{t2}
    The system latency minimization optimization problem \eqref{eq:minD_bar} is equivalent to:
   \begin{equation}
   \min_{k_1,k_2,...,k_N}  \frac{V_{i_t}}{\sum_{n \in \mathcal N} \frac{1}{\alpha_{k_n}^t}}
   \label{eq:Hungarian}
   \end{equation} 
   \begin{align*}
 \text{\rm s.t. }  C_3^* - C_4^*.
  \end{align*}
\end{myth}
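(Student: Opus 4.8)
The plan is to chain Lemma~\ref{l1} with a simple monotonicity argument and then recognize the resulting combinatorial problem as a linear assignment problem. First I would fix an arbitrary feasible pair consisting of a selected-user set $\cS^t\subset\cK$ and a subchannel allocation $\boldsymbol b^t$ obeying $C_3^*$ and $C_4^*$, and invoke Lemma~\ref{l1}. In the regime where the energy budgets are not the active constraint --- i.e. $E_k/A_k^t(b_{k,n}^t)\ge V_{i_t}/\bigl(\alpha_k^t\sum_{j\in\cS^t}1/\alpha_j^t\bigr)$ for every $k\in\cS^t$ --- the optimal allocation $\hat z_k^t$ of \eqref{eq:optzkt} equalizes the per-user processing time, so that $\alpha_k^t\hat z_k^t=V_{i_t}/\sum_{j\in\cS^t}1/\alpha_j^t$ for all $k\in\cS^t$, and one verifies directly that $\sum_{k\in\cS^t}\hat z_k^t=V_{i_t}$, i.e. $C_1^*$ holds with equality. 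Substituting into \eqref{eq:D_bar} yields the closed form ${\bar D}^t=V_{i_t}/\sum_{k\in\cS^t}1/\alpha_k^t$ for the optimal latency conditioned on $(\cS^t,\boldsymbol b^t)$.

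Next I would carry out the outer minimization over $(\cS^t,\boldsymbol b^t)$. Since $1/\alpha_k^t>0$ for every user, the map $\cS^t\mapsto\sum_{k\in\cS^t}1/\alpha_k^t$ is strictly increasing under set inclusion, so ${\bar D}^t$ is strictly decreasing in the number of selected users; hence an optimal solution selects as many users as the spectrum permits. Because $C_4^*$ forces each selected user to occupy exactly one subchannel and $C_3^*$ forces each subchannel to serve at most one user, the number of selected users is at most $N$ (more precisely $\min\{K,N\}$), and --- assuming $K\ge N$, as is the relevant case --- the optimum is attained by a one-to-one assignment $n\mapsto k_n$ of the $N$ subchannels to distinct users. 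Writing $\alpha_{k_n}^t$ for the processing-time-per-bit of the user $k_n$ placed on subchannel $n$, which depends on $n$ through the gain $g_{k_n,n}^t$ entering $r_{k_n}^t$ in \eqref{eq:rkt}, the conditional optimum becomes $V_{i_t}/\sum_{n\in\cN}1/\alpha_{k_n}^t$, exactly the objective of \eqref{eq:Hungarian}; and the feasible assignments are precisely those described by $C_3^*$--$C_4^*$, which closes the equivalence.

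For completeness I would then observe that \eqref{eq:Hungarian} is monotone in $\sum_{n\in\cN}1/\alpha_{k_n}^t$, so it reduces to maximizing $\sum_{n\in\cN}1/\alpha_{k_n}^t$ over perfect matchings between subchannels and users --- a standard linear assignment problem solved in polynomial time by the Hungarian algorithm, which is what is needed in the sequel.

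I expect the one genuine obstacle to be the bookkeeping around the energy constraint $C_6^*$, which the statement of Theorem~\ref{t2} drops: the clean equalized-time form of ${\bar D}^t$ is only exact when the energy-unconstrained allocation of Lemma~\ref{l1} is feasible at the optimizing assignment. I would therefore make that non-binding hypothesis explicit (or note that otherwise the second term of \eqref{eq:optzkt} must be replaced by a water-filling-type correction and the assignment objective adjusted accordingly), rather than attempt to fold $C_6^*$ into the assignment weights.
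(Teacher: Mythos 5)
Your proposal is correct and follows essentially the same route as the paper: invoke Lemma~\ref{l1} to obtain the equalized per-user processing time $\bar D^t = V_{i_t}/\sum_{k\in\mathcal S^t}1/\alpha_k^t$ for each fixed user--subchannel assignment, then minimize over assignments, which is exactly the linear assignment problem \eqref{eq:Hungarian}. The paper's own proof is a two-sentence sketch, and your write-up supplies two steps it silently skips: the monotonicity argument showing the optimum saturates the spectrum (selects $\min\{K,N\}$ users, one per subchannel), and the observation that the stated equivalence is exact only when the energy cap $E_k/A_k^t(b_{k,n}^t)$ in \eqref{eq:optzkt} is non-binding --- Theorem~\ref{t2} drops $C_6^*$ from the constraint set without comment, so if the energy term is active at the optimizing assignment the objective $V_{i_t}/\sum_{n\in\mathcal N}1/\alpha_{k_n}^t$ no longer equals the true optimal latency. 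Your explicit non-binding hypothesis (or the noted water-filling correction) patches a genuine gap in the paper's argument rather than introducing one.
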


\begin{proof}
Based on Lemma 1, we establish the optimal sensing data size allocation $\hat{z}_k^t$ for each possible user-subchannel assignment in problem \eqref{eq:minaz}. Then, the original problem \eqref{eq:minD_bar} is reduced to finding the optimal user-subchannel assignment $\{k_1, k_2, \ldots, k_N\}$ that minimizes the objective function. This transformation preserves optimality while simplifying the problem structure to a pure assignment problem with constraints $C^*_3$ and $C^*_4$.
\end{proof}
To solve problem \eqref{eq:Hungarian}, we consider a bipartite graph $G = (\mathcal{K} \times \mathcal{N}, E)$, where $\mathcal{K}$ denotes the set of users, $\mathcal{N}$ represents the set of subchannels, $E$ comprises the edges connecting users to subchannels. Each edge $(k, n) \in E$ is weighted by $\frac{1}{\alpha_{k_n}^t}$, corresponding to the terms in the objective function of problem \eqref{eq:Hungarian}. This graph-theoretic formulation leads to Proposition 3:
\begin{myprop}
\label{p3}
The optimal solution to problem \eqref{eq:minD_bar} can be obtained by solving the maximum weighted matching problem in bipartite graph $G$ using the Hungarian algorithm.
\end{myprop}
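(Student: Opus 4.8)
The plan is to chain the reduction of Theorem~\ref{t2} with the classical fact that the Hungarian algorithm solves the linear assignment (maximum-weight bipartite matching) problem exactly. By Theorem~\ref{t2}, problem \eqref{eq:minD_bar} is equivalent to \eqref{eq:Hungarian}, i.e., to minimizing $V_{i_t}/\sum_{n\in\mathcal N}\frac{1}{\alpha_{k_n}^t}$ over assignments $(k_1,\dots,k_N)$ satisfying $C_3^*$ and $C_4^*$. Since $V_{i_t}>0$ is a fixed constant and $\sum_{n\in\mathcal N}\frac{1}{\alpha_{k_n}^t}>0$ for every feasible assignment, the map $x\mapsto V_{i_t}/x$ is strictly decreasing on $(0,\infty)$; hence \eqref{eq:Hungarian} has the same set of optimizers as
\[
\max_{k_1,\dots,k_N}\ \sum_{n\in\mathcal N}\frac{1}{\alpha_{k_n}^t}, \quad \text{s.t. } C_3^*,\ C_4^*.
\]
First I would make this equivalence explicit, so that the remaining task is purely combinatorial: maximize an additive objective over the feasible region carved out by $C_3^*$ and $C_4^*$.

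Next I would argue that this feasible region is exactly the set of matchings of the bipartite graph $G$ with parts $\mathcal K$ and $\mathcal N$. Constraint $C_3^*$, $\sum_{k\in\mathcal K}b_{k,n}^t\le 1$, says each subchannel is used by at most one user; constraint $C_4^*$ together with $s_k^t\in\{0,1\}$ ($C_{11}^*$) says $\sum_{n\in\mathcal N}b_{k,n}^t=s_k^t\le 1$, i.e., each user is assigned at most one subchannel. Thus a feasible pair $(\boldsymbol b^t,\boldsymbol s^t)$ is precisely a matching in $G$, with $\boldsymbol s^t$ recovered as the set of matched users, so the matching simultaneously fixes user selection and subchannel allocation. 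Weighting edge $(k,n)\in E$ by $\frac{1}{\alpha_{k_n}^t}>0$ (the per-bit processing rate of user $k$ on subchannel $n$), the displayed objective is the total weight of the chosen matching. Because all weights are strictly positive, an optimal matching leaves no subchannel unmatched while a user is still free, so some optimum uses $\min\{K,N\}$ edges; I would pad $G$ with $|K-N|$ dummy vertices joined by zero-weight edges to obtain a balanced instance with the same optimal value. The Hungarian algorithm returns a maximum-weight perfect matching of this padded instance in $O\big((\max\{K,N\})^3\big)$ time; discarding the edges incident to dummy vertices yields an optimizer of the maximization above, hence of \eqref{eq:Hungarian}, and by Theorem~\ref{t2} an optimal solution of the system-latency sub-problem \eqref{eq:minD_bar}.

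The main obstacle here is bookkeeping rather than mathematical depth: one must verify carefully that $C_3^*$--$C_4^*$ encode exactly bipartite matchings (in particular that they do not force every user to be selected or every subchannel to be used), handle the unbalanced case $K\ne N$ correctly via padding so that the Hungarian algorithm is applicable, and confirm that strict positivity of the weights $\frac{1}{\alpha_{k_n}^t}$ guarantees that an optimal matching is as large as possible, so that restricting attention to matchings of size $\min\{K,N\}$ loses no optimality. Once these points are pinned down, optimality of the returned assignment follows immediately from the known correctness of the Hungarian algorithm for linear assignment, and the monotone change of variable $x\mapsto V_{i_t}/x$ transports it back to \eqref{eq:minD_bar}.
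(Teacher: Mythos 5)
Your proposal is correct and follows essentially the same route as the paper: reduce \eqref{eq:Hungarian} to a maximum-weight bipartite matching with edge weights $1/\alpha_{k_n}^t$ and invoke the Hungarian algorithm. The paper's own proof is a two-sentence assertion of this correspondence; you supply the details it omits (the monotone transformation $x\mapsto V_{i_t}/x$ turning the minimization into a maximization, the verification that $C_3^*$--$C_4^*$ encode exactly the matchings of $G$, and the padding for the unbalanced case), all of which are sound.
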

\begin{proof}
The optimization problem \eqref{eq:Hungarian} is equivalent to finding a matching in a bipartite graph $G$. This formulation precisely corresponds to the maximum weighted matching problem in bipartite graphs, for which the Hungarian algorithm provides an optimal solution. 
%The resulting matching will give the optimal user-subchannel allocation $\boldsymbol{b}^t$ and corresponding data allocation $\boldsymbol{z}_k^t$ that minimizes the objective function in \eqref{eq:minD_bar}.
\end{proof}
\subsection{Task Sensing and Caching Sub-problem}
\subsubsection{Task sensing decision}
The objective function \eqref{eq:one-shot-OF} comprises two components: system latency $\bar{D}^t$ and AoI value $\Delta_{i_t}^t$, weighted by $\beta_1 l_{i_t}^t$ and $\beta_2 (1 - l_{i_t}^t)$, respectively. Therefore, the weighted minimization problem can be expressed as $\min\{\beta_1 {\bar D}^t , \beta_2 \Delta_{i_t}^t\}$. Let $\beta = \beta_0 \cdot \frac{\beta_1}{\beta_2}$, where $\beta_0 \in [0,1]$ denotes the system re-sensing frequency. Thus, the task sensing decision can be made according to the following Definition:
\begin{mydef}
\label{d4}
Task $i_t$ requires re-sensing if either of the following conditions is satisfied:
\begin{equation}
\left( 0 \le \Delta_{i_t}^t \le \Delta_{\rm{max}} \text{ and } \Delta_{i_t}^t \ge \beta \bar{D}^t \right) \quad \text{or} \quad c_{i_t}^{t-1} = 0.
\end{equation}
\end{mydef}
\subsubsection{Caching policy}
After task sensing completion, the MCS agent determines whether to cache the results, let $\Theta^t$ be the set of cached tasks at time slot $t$. The caching policy follows three scenarios: \enum{1}. First, when the BS cache has not reached its capacity limitation, new task sensing results will directly cached. \enum{2}. Second, for a full cache, incoming updates of existing tasks supersede their previous versions. \enum{3}. Otherwise, we employ Bayesian updating with posterior probability to evaluate the task values. In the cache replacement process, we prioritize removing tasks with lower posterior probabilities. The posterior probability of task $i$ at time slot $t$ is given by:
    \begin{equation}
        X_{{\rm post},i}^t = \frac{X_{{\rm prior},i}^t \cdot L_i^t}{\sum_{i \in \Theta^t} X_{{\rm prior},i}^t \cdot L_i^t}, ~i \in \Theta^t,
        \label{xpost}
    \end{equation}
where $X_{{\rm prior},i}^t = 1/\Delta_i^t$ is the prior probability. The likelihood is calculate by $L^t_i = \ln (1+ \epsilon \cdot F_i^t)$, where $\epsilon$ is a proportional variable set to $1/V_i$, and $F_i^t$ is represents the published frequency of task $i$ in the past. 
\subsection{Time-efficient Joint Optimization Algorithm}
  In this subsection, we present the time-efficient joint optimization algorithm for problem \eqref{eq:OF}. The original problem \eqref{eq:OF} is decomposed into $T$ sequential sub-problems, each solved by the framework established in the previous subsection. Specifically, for the sensing latency minimization sub-problem \eqref{eq:minD_bar}, Lemma \ref{l1} determines the optimal sensing data size for selected users, while Theorem \ref{t2} and Proposition \ref{p3} transform the user-subchannel allocation into a weighted bipartite matching problem solvable via the Hungarian algorithm. Based on Definition \ref{d4}, sensing decisions are made by comparing system latency $\bar{D}^t$ with AoI value $\Delta_{i_t}^t$. Then the caching policy using Bayesian update is given. The complete procedure of our proposed joint optimization algorithm for problem \eqref{eq:OF} is given by Algorithm \ref{a2}. Besides, to evaluate the efficiency of our proposed algorithm, we analyze the computational complexity in the following Theorem:  
 
\begin{algorithm}[t]
    \footnotesize
    \setcounter{AlgoLine}{0}
    \caption{Time-efficient joint optimization algorithm}
    \label{a2}
    \SetKwInOut{Input}{Input}
    \SetKwInOut{Output}{Output}
    \KwIn{Number of total time slots $T$.}
\For{$t \le T$}
{
    Obtain $\boldsymbol{b}^t$, $\boldsymbol{s}^t$, $\boldsymbol{z}^t$ by Proposition \ref{p3}.\\
    Calculate $\Delta_{i_t}^t$ and ${\bar D}^t$ by \eqref{eq:aoi} and \eqref{eq:D_bar}, respectively.\\
    \uIf{$c_{i_t}^{t-1} = 1$ \textbf{and} $\Delta_{i_t}^t \le \beta {\bar D}^t$}{
            Use the old task results in BS cache, i.e., $l_{i_t}^t = 0$.\\
            Reset $\boldsymbol{b}^t =\boldsymbol{0}$, $\boldsymbol{s}^t=\boldsymbol{0}$ and $\boldsymbol{z}^t=\boldsymbol{0}$.
    }
    \Else{
            Re-sensing task $i_t$, i.e., $l_{i_t}^t = 1$.\\
           \uIf{$\boldsymbol{c}^{t-1} V_i + V_{i_t} \le Q$}
    {
     Add the task sensing result $i_t$ to the BS cache, i.e., ${c}_{i_t}^t = 1$.
    }
    \uElseIf{${c}_i^{t-1} V_i = Q$ \textbf{and} ${c}^{t-1}_{i_{t}}=1$}
{
Replace the old version of task $i_t$ with the new sensing task result, update ${c}^t_{i_{t}} = 1$.
}

    \Else{
        Add the task sensing result $i_t$ to the BS cache.
         
        \For{$i \in \Theta^t$}
        {
          Calculate the posterior probability of task $i$ by \eqref{xpost}.
        }
        Sort each task $i \in \Theta^t$ by $X_{{\rm post},i}^t$ in descending order.
        
        Remove the task with the smallest $X_{{\rm post},i}^t$ from the BS cache to satisfy the cache capacity constraint $Q$.

    }
    }
    $t = t + 1$;\\
}
\Output{$\boldsymbol{b}$, $\boldsymbol{s}$, $\boldsymbol{z}$, $\boldsymbol{l}$, $\boldsymbol{c}$.}
\end{algorithm}
\begin{myth}
The computational complexity of Algorithm 1 is $\mathcal{O}(T(N^2 K + M {\log} M))$.
\end{myth}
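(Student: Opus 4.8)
The plan is to account for the per--time--slot cost of Algorithm~\ref{a2} line by line and then multiply by the $T$ iterations of the outer loop, since nothing is carried across slots except the $\cO(1)$--sized cache bookkeeping. The dominant contributions come from two places: the user--subchannel assignment of line~2 and the cache--replacement routine inside the final \textbf{else} branch.

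First I would bound the assignment step. By Proposition~\ref{p3}, obtaining $\boldsymbol b^t,\boldsymbol s^t,\boldsymbol z^t$ amounts to solving a maximum weighted bipartite matching on $G=(\mathcal K\times\mathcal N,E)$ via the Hungarian algorithm. Constructing the weight matrix requires evaluating $1/\alpha_{k_n}^t$ (and the threshold $E_k/A_k^t$ from Lemma~\ref{l1}) for each of the $\cO(NK)$ user--subchannel pairs, which is $\cO(NK)$ work. The Hungarian algorithm on an $N\times K$ (w.l.o.g.\ $K\ge N$) cost matrix runs in $\cO(N^2K)$; evaluating \eqref{eq:aoi} and \eqref{eq:D_bar} afterward is only $\cO(K)$. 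Hence line~2 (together with line~3) costs $\cO(N^2K)$ per slot.

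Next I would bound the caching block. Checking the capacity predicate in the first two \textbf{uIf}/\textbf{uElseIf} branches is $\cO(M)$ (a sum over $\mathcal M$). The expensive branch is the third: after inserting task $i_t$ we compute the posterior probability \eqref{xpost} for every $i\in\Theta^t$; since $|\Theta^t|\le M$ and each $X_{{\rm post},i}^t$ needs the common normaliser $\sum_{i\in\Theta^t}X_{{\rm prior},i}^t L_i^t$, a single $\cO(M)$ pass builds all likelihoods and the normaliser, then a second $\cO(M)$ pass yields all posteriors. Sorting $\Theta^t$ in descending order of $X_{{\rm post},i}^t$ is $\cO(M\log M)$, and removing the smallest element(s) to restore $C_2^*$ is $\cO(M)$. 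So the caching block costs $\cO(M\log M)$ per slot, which dominates the $\cO(M)$ of the cheaper branches.

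Finally I would combine: one iteration of the \textbf{for} loop costs $\cO(N^2K)+\cO(M\log M)=\cO(N^2K+M\log M)$, and there are $T$ iterations, giving total complexity $\cO\!\big(T(N^2K+M\log M)\big)$, as claimed. I do not expect any real obstacle here; the only point requiring a little care is stating the Hungarian cost consistently (it is $\cO(N^3)$ on a balanced $N\times N$ instance, but $\cO(N^2K)$ once we allow $K\ge N$ users competing for $N$ subchannels, and $N^2K$ is what appears in the theorem), and noting that the sort, not the posterior computation, is what forces the $M\log M$ rather than a bare $M$.
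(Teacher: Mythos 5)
Your proposal is correct and follows the same approach as the paper's proof: it attributes $\mathcal{O}(N^2K)$ per slot to the Hungarian matching, $\mathcal{O}(M\log M)$ to the posterior-probability sort in the caching step, and multiplies by the $T$ outer iterations. Your version is simply a more detailed line-by-line accounting of the same argument.
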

\begin{proof}
Our proposed algorithm complexity comprises two major components. First, the Hungarian algorithm for user-subchannel matching requires $\mathcal{O}(N^2 K)$ operations \cite{Li}. Second, the caching strategy involves sorting tasks by posterior probabilities with complexity $\mathcal{O}(M {\log} M)$ \cite{OLSA}. Therefore, the total computational complexity of Algorithm 1 is $\mathcal{O}(T(N^2 K + M {\log} M))$.
\end{proof}

\section{Simulation Result}
In this section, we conduct extensive simulations to assess the performance of our proposed time-efficient joint optimization algorithm. The key simulation parameters are presented in Table I. To thoroughly evaluate the performance of our proposed algorithm, we compare it against five baseline schemes:
\begin{itemize}

\item\textbf{Baseline 1}: Assigns each subchannel to the user with the highest channel gain and uniformly allocates tasks among scheduled users, i.e., $z_k^t = \frac{V_{i_t}}{N}$. Employs random task sensing decision and replaces the oldest cached task result when the cache is full.

\item\textbf{Baseline 2}: Randomly assigns subchannels to users and employs a fractional task allocation strategy based on channel gains, where $z_k^t = V_{i_t} \frac{g_{{k_n},n}^t}{\sum_{j \in \cN} g_{{k_j},j}^t}$. Task sensing and caching policies are the same as Baseline 1.

\item\textbf{Baseline 3}: Follows the same subchannel allocation, task sensing, and caching decisions as Baseline 1, with a fractional task allocation approach similar to Baseline 2.

\item\textbf{Baseline 4}: Adopts the same subchannel and task allocation strategies as Baseline 3, while utilizing the same task sensing and caching decisions as our proposed algorithm.

\item\textbf{Baseline 5}: Employs our proposed algorithm for subchannel and task allocations but always re-senses tasks without any caching strategy.
\end{itemize}
	\begin{table}
	\caption{\\
 S{\footnotesize YSTEM} P{\footnotesize ARAMETERS} S{\footnotesize ETTING}}
	\footnotesize
    \setlength{\tabcolsep}{3mm}{
    \begin{tabular}{|c|c|}
   \hline
   \textbf{System parameters} & \textbf{Values}   \\
   \hline  
   The number of users $K$ & 25 to 45  \\
   \hline 
   The number of subchannels $N$ & 6 to 26   \\
   \hline 
   Bandwidth of each subchannel $W$ & 1 MHz\\
   \hline
   Maximum AoI value $\Delta_{\rm max}$ & $50$\\
   \hline
   Cache capacity $Q$ & $10^7$ to $10^8$ bit\\
   \hline 
   Noise power density $N_0$ &	 $- 174$ dBm/Hz\\
   \hline 
   Distance between user and BS & 	30 to 500 m\\
   \hline
   Weight of system latency $\beta_1$ &  1 \\
   \hline
   Weight of AoI $\beta_2$ &  0.1 \\
   \hline 
   Frequency of re-sense $\beta_0$ & 0.7 \\
   \hline 
   Large-scale fading & $128.1 + 37.6 \log_{10} (x)$ \\
   \hline 
   Small-scale fading & Rayleigh fading, variance of 1  \\
   \hline 
   Sensing data rate $o_k^t$ & $10^4$ to $10^6$ bit/s\\
   \hline 
   Sensing energy consumption per bit & $10^{-12}$ to $10^{-11}$ J/bit \\
   \hline 
   Transmit power $P_k^t$ & 0.1 to 0.2 W  \\
   \hline 
   Maximum energy consumption $E_k$ & 0.01 to 0.1 J\\
   \hline 
   Smallest task bit size $V_{i_t}$ & $0.5 \times 10^7$ to $1.5 \times 10^{7}$ bit\\
   \hline
    \end{tabular}
}
     \end{table}
     
Fig. \ref{fig:2} shows the weighted sum of delay and AoI versus transmit power, where users number $K=30$ and subchannels $N=20$. Our proposed algorithm demonstrates superior performance over all baselines, achieving improvements of 96.56\%, 87.26\%, 75.53\%, 70.42\%, and 24.32\% , respectively, when transmit power is 0.1 Watt. Baseline 1 employs the uniform task allocation scheme, which may lead to a higher proportion of tasks being assigned to users with weaker sensing capabilities. Furthermore, Baseline 5 shows the best performance among baselines because it incorporates our subchannel and task allocation strategy.

Fig. \ref{fig:3} shows that the weighted sum of delay and AoI value decreases with more users due to enhanced user diversity, which allows for the selection of users with superior sensing and transmission capabilities. With the number of subchannels fixed at 20, our scheme consistently outperforms all baselines. Notably, the performance of Baseline 2 slightly deteriorates with increasing number of users. This is due to its random subchannel allocation, which may assign subchannels to users with poor capabilities. Baseline 1 demonstrates the most substantial improvement among all comparison baselines.
\begin{figure}[t]
\centering
\begin{minipage}[t]{0.47\linewidth}
\includegraphics[width=4.5cm]{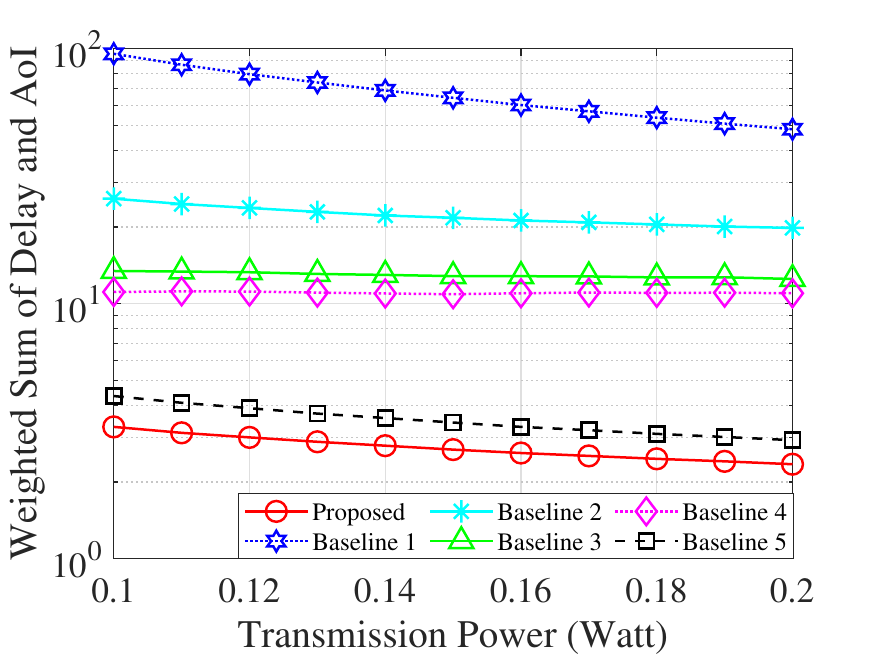}
\captionsetup{font=footnotesize}
\caption{Weighted sum of delay and AoI versus the transmit power.}
\label{fig:2}
\end{minipage}
\quad
\begin{minipage}[t]{0.47\linewidth}
\includegraphics[width=4.5cm]{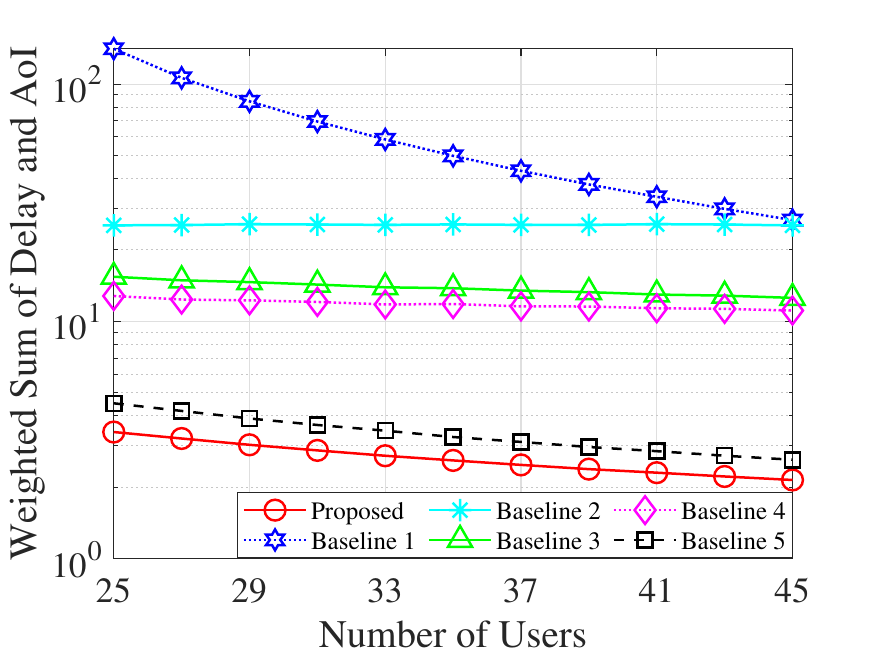}
\captionsetup{font=footnotesize}
\caption{Weighted sum of delay and AoI versus the number of users.}
\label{fig:3}
\end{minipage}
\centering
%\vspace{-5pt}
\end{figure}
\begin{figure}[t]
%\vspace{-5pt}
\centering
\begin{minipage}[t]{0.47\linewidth}
\includegraphics[width=4.5cm]{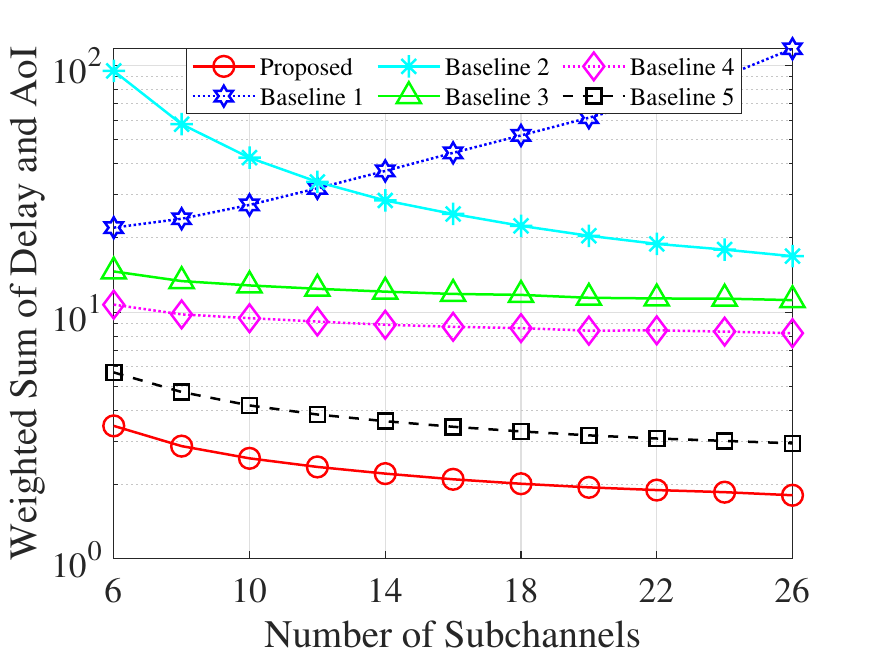}
\captionsetup{font=footnotesize}
\caption{Weighted sum of delay and AoI versus the number of subchannels.}
\label{fig:4}
\end{minipage}
\quad
\begin{minipage}[t]{0.47\linewidth}
\includegraphics[width=4.5cm]{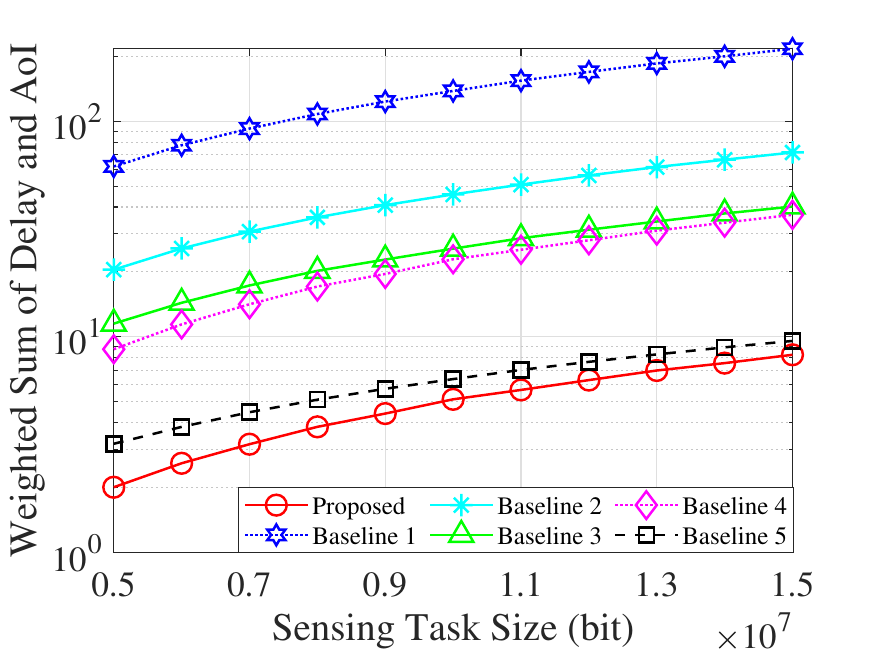}
\captionsetup{font=footnotesize}
\caption{Weighted sum of delay and AoI versus task bit size.}
\label{fig:5}
\end{minipage}
\centering 
\end{figure}

Fig. \ref{fig:4} evaluates the weighted sum of delay and AoI against subchannel numbers from 6 to 26, where the number of users is fixed at 30. Our proposed algorithm maintains superior performance throughout all subchannel ranges. Baseline 1 increases significantly with increasing subchannels due to its uniform task allocation, which may assign more users with lower sensing capacity to available subchannels. Moreover, Baseline 2 shows the most significant degradation, as its random channel allocation increases the likelihood of assigning high-capacity users to available subchannels.

Fig. \ref{fig:5} illustrates the weighted sum of delay and AoI performance versus sensing task bit size from $0.5\times10^7$ to $1.5\times10^7$ bits. All schemes demonstrate linear growth with task size, but our proposed method maintains the lowest values throughout. We can observe that Baseline 1 performs the poorest performance due to its uniform task allocation, indicating that task allocation strategy has a more significant impact than subchannel allocation on system performance.

\section{Conclusion}
In this study, we addressed the challenge of minimizing system latency while maintaining the freshness of cached tasks in cache-enabled MCS networks. By decomposing the non-convex mixed-integer programming problem into multiple one-shot sub-problems, we developed a framework that optimized user selection, subchannel allocation, task allocation, sensing decisions, and caching policies. Based on this framework, we proposed a time-efficient algorithm that sequentially solved these one-shot problems across the entire time horizon. Extensive simulations demonstrated the effectiveness of our algorithm. 
%Future research could explore the impact of user mobility and the integration of edge computing to further enhance system performance.

\bibliographystyle{IEEEtran}

\bibliography{Joint_System_Latency_and_Data_Freshness_Optimization_for_Cache-enabled_Mobile_Crowdsensing_Networks.bib}

%\input{Joint_System_Latency_and_Data_Freshness_Optimization_for_Cache-enabled_Mobile_Crowdsensing_Networks.bbl}
%\bibliographystyle{IEEEtran}
%\bibliography{Joint_System_Latency_and_Data_Freshness_Optimization_for_Cache-enabled_Mobile_Crowdsensing_Networks}
\end{document}